\newtheorem{theorem}{Theorem}
\newtheorem{definition}{Definition}
\newcommand{\kb}{\ensuremath{\mathcal{K}}}
\newcommand{\entset}{\ensuremath{U}}
\newcommand{\litset}{\ensuremath{L}}
\newcommand{\attrset}{\ensuremath{A}}
\newcommand{\relset}{\ensuremath{R}}
\newcommand{\tripleset}{\ensuremath{T}}
\newcommand{\atset}{\ensuremath{T_{attr}}}
\newcommand{\rtset}{\ensuremath{T_{rel}}}
\newcommand{\pair}{\ensuremath{p}}
\newcommand{\match}{\ensuremath{m}}
\newcommand{\pmatch}{\ensuremath{\match_\pair}}
\newcommand{\ent}{\ensuremath{u}}
\newcommand{\rel}{\ensuremath{r}}
\newcommand{\attr}{\ensuremath{a}}
\newcommand{\lit}{\ensuremath{l}}
\newcommand{\ergraph}{\ensuremath{\mathcal{G}}}
\newcommand{\vertex}{\ensuremath{v}}
\newcommand{\vertexset}{\ensuremath{V}}
\newcommand{\vmatch}{\ensuremath{\match_v}}
\newcommand{\vvmatch}{\ensuremath{\match_{v'}}}
\newcommand{\edgeset}{\ensuremath{E}}
\newcommand{\unresolved}{\ensuremath{C}}
\newcommand{\pergraph}{\ensuremath{\mathcal{F}}}
\newcommand{\benefit}{\ensuremath{\mathtt{benefit}}}
\newcommand{\inferred}{\ensuremath{\mathtt{inferred}}}
\newcommand{\question}{\ensuremath{q}}
\newcommand{\qmatch}{\ensuremath{\match_\question}}
\newcommand{\questionset}{\ensuremath{Q}}
\newcommand{\answerset}{\ensuremath{H}}
\newcommand{\qlimit}{\ensuremath{\mu}}
\newcommand{\initialmatches}{\ensuremath{M_{in}}}
\newcommand{\candidatematches}{\ensuremath{M_c}}
\newcommand{\attrmatches}{\ensuremath{M_{at}}}
\newcommand{\partition}{\ensuremath{B}}
\newcommand{\consistency}{\ensuremath{\epsilon}}
\def\valueset_#1{N_{\ent_{#1}}^{\attr_{#1}}}
\def\neighborset_#1{N_{\ent_{#1}}^{\rel_{#1}}}
\def\neighbormatch{M_{u_1,u_2}}
\def\neighbormatchcnt{L_{u_1,u_2}}
\newcommand{\qdelta}{\Delta q}
\newcommand{\worker}{\ensuremath{w}}
\newcommand{\trueworkers}{\ensuremath{W_T}}
\newcommand{\falseworkers}{\ensuremath{W_F}}
\newcommand{\pairpartition}{\ensuremath{N_\pair}}
\begin{document}


\title{Crowdsourced Collective Entity Resolution with Relational Match Propagation}

\author{\IEEEauthorblockN{Jiacheng Huang\IEEEauthorrefmark{2},
Wei Hu\thanks{\IEEEauthorrefmark{1}Corresponding author}\IEEEauthorrefmark{2}\IEEEauthorrefmark{1}, Zhifeng Bao\IEEEauthorrefmark{3} and
Yuzhong Qu\IEEEauthorrefmark{2}}
\IEEEauthorblockA{\IEEEauthorrefmark{2}{State Key Laboratory for Novel Software Technology, Nanjing University, Nanjing, China}\\
\IEEEauthorrefmark{3}{RMIT University, Melbourne, Australia}\\
Email: jchuang.nju@gmail.com, whu@nju.edu.cn, zhifeng.bao@rmit.edu.au, yzqu@nju.edu.cn}
}

\date{\today}

\maketitle

\begin{abstract}
Knowledge bases (KBs) store rich yet heterogeneous entities and facts. Entity resolution (ER) aims to identify entities in KBs which refer to the same real-world object. Recent studies have shown significant benefits of involving humans in the loop of ER. They often resolve entities with pairwise similarity measures over attribute values and resort to the crowds to label uncertain ones. However, existing methods still suffer from high labor costs and insufficient labeling to some extent. In this paper, we propose a novel approach called crowdsourced collective ER, which leverages the relationships between entities to infer matches jointly rather than independently. Specifically, it iteratively asks human workers to label picked entity pairs and propagates the labeling information to their neighbors in distance. During this process, we address the problems of candidate entity pruning, probabilistic propagation, optimal question selection and error-tolerant truth inference. Our experiments on real-world datasets demonstrate that, compared with state-of-the-art methods, our approach achieves superior accuracy with much less labeling.
\end{abstract}

\section{Introduction}
\label{sec:intro}

Knowledge bases (KBs) store rich yet heterogeneous entities and facts about the real world, where each fact is structured as a triple in the form of $(entity, property, value)$. Entity resolution (ER) aims at identifying entities referring to the same real-world object, which is critical in cleansing and integration of KBs. Existing approaches exploit diversified features of KBs, such as attribute values and entity relationships, see surveys~\cite{getoor2012,elmagarmid2007survey,sun2019progress,bleiholder2009data}. Recent studies have demonstrated that \emph{crowdsourced ER}, which recruits human workers to solve micro-tasks (e.g., judging if a pair of entities is a match), can improve the overall accuracy.

Current crowdsourced ER approaches mainly leverage \emph{transitivity}~\cite{wang2013,whang2013,firmani2016} or \emph{monotonicity}~\cite{arasu2010,gokhale2014corleone,das2017falcon,qian2017erlearn,zhuang2017hike} as their resolution basis. The transitivity-based approaches rely on the observation that the match relation is usually an \emph{equivalence} relation. The monotonicity-based ones assume that each pair of entities can be represented by a similarity vector of attribute values, and the binary classification function, which judges whether a similarity vector is a match, is monotonic in terms of the \emph{partial order} among the similarity vectors.

However, both kinds of approaches can hardly infer matches across different types of entities. Let us see Figure~\ref{fig:graph} for example. The figure shows a directed graph, called \emph{entity resolution graph} (ER graph), in which each vertex denotes a pair of entities and each edge denotes a relationship between two entity pairs. Assume that $(\textit{y:Joan},\textit{d:Joan})$ is labeled as a match, the birth place pair $(\textit{y:NYC},\textit{d:NYC})$ is expected to be a match. Since these two pairs are in different equivalence classes, the transitivity-based approaches are apparently unable to take effect. As different relationships (like \textit{y:directedBy} and \textit{y:wasBornIn}) make most similarity vectors of entities of different types incomparable, the monotonicity-based approaches have to handle them separately.

\begin{figure}
\centering
\includegraphics[width=\columnwidth]{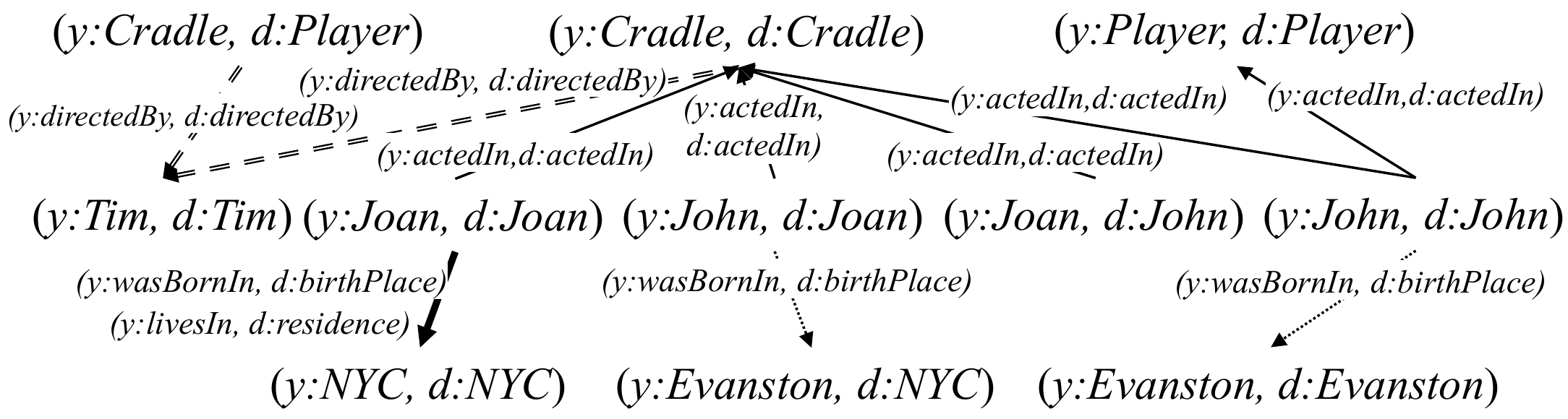}
\caption{An ER graph example between YAGO and DBpedia. Entities in YAGO are prefixed by ``\textit{y:}'', and entities in DBpedia are prefixed by ``\textit{d:}''. \textit{Joan}, \textit{John} and \textit{Tim} are persons. \textit{Cradle} and \textit{Player} are movies. \textit{NYC} and \textit{Evanston} are cities.}
\label{fig:graph}
\end{figure}

In this paper, we propose a new approach called Remp (\underline{Re}lational \underline{m}atch \underline{p}ropagation) to address the above problems. The main idea is to leverage \emph{collective ER} that resolves entities connected by relationships jointly and distantly, based on a small amount of labels provided by workers. Specifically, Remp iteratively asks workers to label a few entity pairs and propagates the labeling information to their neighboring entity pairs in distance, which are then resolved jointly rather than independently. There remain two challenges to achieve such a crowdsourced collective ER.

The first challenge is how to conduct an effective relational match propagation. Relationships like functional/inverse functional properties in OWL \cite{schneider2004owl} (e.g., \textit{y:wasBornIn}) provide a strong evidence, but these properties only account for a small portion while the majority of relationships is multi-valued (e.g., \textit{actedIn}). Multi-valued relationships often connect non-matches to matches (e.g., $(\textit{y:John}, \textit{d:Joan})$ is connected to $(\textit{y:Cradle},\textit{d:Cradle})$ in Figure~\ref{fig:graph}). Therefore, we propose a new relational match propagation model, to decide which neighbors can be safely inferred as matches.

The second challenge is how to select good questions to ask workers. For an ER graph involving two large KBs, the number of vertices (i.e. candidate questions) can be quadratic. We introduce an entity pair pruning algorithm to narrow the search space of questions. Moreover, different questions have different inference power. In order to maximize the expected number of inferred matches, we propose a question selection algorithm, which chooses possible entity matches scattered in different parts of the ER graph to achieve the largest number of inferred matches.

In summary, the main contributions of this paper are listed as follows:
\begin{itemize}
\item We design a partial order based entity pruning algorithm, which significantly reduces the size of an ER graph.
\item We propose a relational match propagation model, which can jointly infer the matches between different types of entities in distance.
\item We formulate the problem of optimal multiple questions selection with cost constraint, and design an efficient algorithm to obtain approximate solutions.
\item We present an error-tolerant method to infer truths from imperfect human labeling. Moreover, we train a classifier to handle isolated entity pairs.
\item We conduct real-world experiments and comparison with state-of-the-art approaches to assess the performance of our approach. The experimental results show that our~approach achieves superior accuracy with much fewer labeling tasks.
\end{itemize}

\noindent\textbf{Paper organization.} Section~\ref{sect:lit} reviews the literature. Section~\ref{sect:def} defines the problem and sketches out the approach. In Sections~\ref{sect:graph}--\ref{sect:inference}, we describe the approach in detail. Section~\ref{sect:exp} reports the experiments and results. Last, Section~\ref{sect:concl} concludes this paper.

\section{Related Work}
\label{sect:lit}


\subsection{Crowdsourced ER}

\noindent\textbf{Inference models.} Based on the transitive relation of entity matches, many approaches such as \cite{wang2013,vesdapunt2014} make use of prior match probabilities to decide the order of questions. Firmani et al.~\cite{firmani2016} proved that the optimal strategy is to ask questions in descending order of entity cluster size. They formulated the problem of crowdsourced ER with early termination and put forward several question ordering strategies. Although the transitive relation can infer matches within each cluster, workers need to check all clusters. 

On the other hand, Arasu et al.~\cite{arasu2010} investigated the monotonicity property among the similarity vectors of entity pairs. Given two similarity thresholds $\mathbf{s}_1$, $\mathbf{s}_2$ and $\mathbf{s}_1 \succeq \mathbf{s}_2$, we have $\mathrm{Pr}[\ent_1\simeq\ent_2 \,|\, \mathbf{s}(\ent_1, \ent_2) \succeq \mathbf{s}_1] \geq \mathrm{Pr}[\ent_1\simeq\ent_2 \,|\, \mathbf{s}(\ent_1, \ent_2) \succeq \mathbf{s}_2]$. ALGPR~\cite{arasu2010} and ERLEARN~\cite{qian2017erlearn} use the monotonicity property to search new thresholds, and estimate the precision of results. In particular, the partial order based approaches~\cite{tao2018, chai2018power,zhuang2017hike} explore similarity thresholds among similarity vectors. Furthermore, POWER \cite{chai2018power} groups similarity vectors to reduce the search space. Corleone~\cite{gokhale2014corleone} and Falcon~\cite{das2017falcon} learn random forest classifiers, where each decision tree is equivalent to a similarity vector. However, these approaches are designed for ER with single entity type. To leverage monotonicity on ER between KBs with complex type information, HIKE~\cite{zhuang2017hike} uses hierarchical agglomerative clustering to partition entities with similar attributes and relationships, and uses the monotonicity techniques on each entity partition to find matches. Although our approach also uses monotonicity, it only uses monotonicity to prune candidate entity pairs. In addition, our approach allows match inference between different entity types (e.g., from persons to locations) to reduce the labeling efforts.

\noindent\textbf{Question interfaces.} Pairwise and multi-item are two common question interfaces. The pairwise interface asks workers to judge whether a pair of entities is a match \cite{firmani2016,verroios2015}. Differently, Marcus et al.~\cite{marcus2011} proposed a multi-item interface to save questions, where each question contains multiple entities to be grouped. Wang et~al. \cite{wang2012} minimized the number of multi-item questions on the given entity pair set such that each question contains at most $k$ entities. Waldo~\cite{verroios2017waldo} is a recent hybrid interface, which optimizes the trade-off between cost and accuracy of the two question interfaces based on task difficulty. The above approaches do not have the inference power and they may generate a large amount of questions.

\noindent\textbf{Quality control.} To deal with errors produced by workers, quality control techniques~\cite{whang2013,verroios2017waldo,galhotra2018} leverage the correlation between matches and workers to find inaccurate labels, and improve the accuracy by asking more questions about uncertain ones. These approaches gain improvement by redundant labeling.

\subsection{Collective ER} 
In addition to attribute values, collective ER~\cite{rastogiL2011mln,bohm2012linda,altowim2014progressive,efthymiou2019minoaner} further takes the relationships between entities into account. CMD~\cite{kimmig2017cmd} extends the probabilistic soft logic to learn rules for ontology matching. LMT~\cite{kouki2017} learns soft logic rules to resolve entities in a familial network. Because learning a probabilistic distribution on large KBs is time-consuming, PARIS~\cite{suchanek2011paris} and SiGMa~\cite{julien2013sigma} implement message passing-style algorithms that obtain seed matches created by hand crafted rules and pass the match messages to their neighbors. However, they do not leverage crowdsourcing to improve the ER accuracy and may encounter the error accumulation problem.

\section{Approach Overview}
\label{sect:def}

\begin{figure*}
\begin{minipage}{.29\textwidth}
	\captionof{table}{Frequently used notations}
	\label{tab:notations}
	\begin{adjustbox}{width=\columnwidth}
	\begin{tabular}{|c|l|}
		\hline 	Notations & Descriptions \\
		\hline	$\mathcal{K}, u$ & a KB and an entity \\
		$r, a$ & a relationship, and an attribute \\
				$N_u^r, N_u^a$ & the value sets of $r$ and $a$ w.r.t. $u$ \\
				$\pair, \question$ & an entity pair, and a question\\
				$\pmatch, \qmatch$ & the event that $\pair$ and $\question$ is a match\\
				$M$ & a set of entity or attribute matches\\
				$C$ & a set of candidate questions \\
				$Q$ & a set of asked questions \\
				$H$ & a set of labels \\
		\hline
	\end{tabular}
	\end{adjustbox}
\end{minipage}
\begin{minipage}{.7\textwidth}
	\vspace{2.3mm}
	\includegraphics[width=\textwidth]{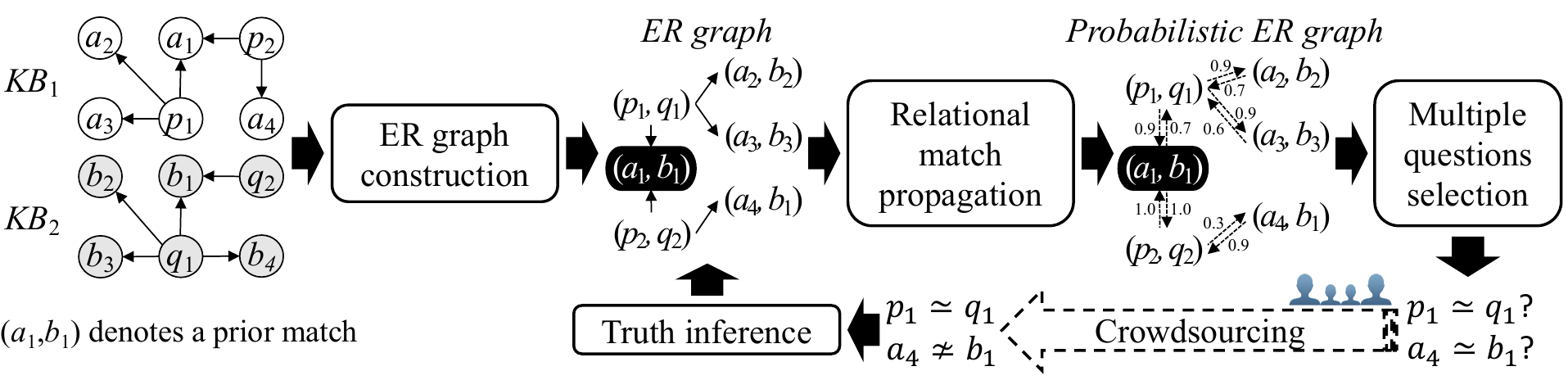}
	\caption{Workflow of the proposed approach}
	\label{fig:workflow}
\end{minipage}
\end{figure*}

In this section, we present necessary preliminaries to define our problem, followed by a general workflow of our approach. Frequently used notations are summarized in Table~\ref{tab:notations}.

\subsection{Preliminaries \& Problem Definition}
\label{subsect:prelim}

\emph{KBs} store rich, structured real-world facts. In a KB, each fact is stated in a triple of the form $(entity, property, value)$, where $property$ can be either an attribute or a relationship, and $value$ can be either a literal or another entity. The sets of entities, literals, attributes, relationships and triples are denoted by $\entset, \litset, \attrset, \relset$ and $\tripleset$, respectively. Therefore, a KB is defined as a 5-tuple $\kb=(\entset,\litset,\attrset,\relset,\tripleset)$. Moreover, attribute triples $\atset\subseteq \entset\times \attrset\times \litset$ attach entities with literals, e.g., $(\textit{Leonardo da Vinci}, \textit{birth date}, \textrm{``1452-4-15"})$, and relationship triples $\rtset\subseteq \entset\times \relset\times \entset$ link entities by relationships, e.g., $(\textit{Leonardo da Vinci}, \textit{works}, \textit{Mona Lisa})$.

\emph{Entity Resolution (ER)} aims to resolve entities in KBs denoting the same real-world thing. Let $\ent_1,\ent_2$ denote two entities in two different KBs. We call the entity pair $\pair=(\ent_1,\ent_2)$ a \emph{match} and denote it by $\ent_1\simeq \ent_2$ or $\pmatch$ if $\ent_1,\ent_2$ refer to the same. In contrast, we call $\pair=(\ent_1,\ent_2)$ a \emph{non-match} and denote it by $\ent_1\not\simeq \ent_2$ if $\ent_1,\ent_2$ refer to two different objects. Both matches and non-matches are regarded as \emph{resolved} entity pairs, and other pairs are regarded as \emph{unresolved}. Traditionally, \emph{reference matches} (i.e., gold standard) are used to evaluate the quality of the ER results, and precision, recall and F1-score are widely-used metrics.

\emph{Crowdsourced ER} carries out ER with human helps. Usually, it executes several human-machine loops, and in each loop, the machine picks one or several questions to ask workers to label them and updates the ER results in terms of the labels. Due to the monetary cost of human labors, a crowdsourced ER algorithm is expected to ask limited questions while obtaining as many results as possible. 

\begin{definition}[Crowdsourced Collective ER] Given two KBs $\kb_1$ and $\kb_2$, and a budget, the crowdsourced collective ER problem is to maximize recall with a precision restriction by asking humans to label tasks while not exceeding the budget.
\end{definition}
Specifically, we assume that both KBs contain ``dense'' relationships and focus on using matches obtained from workers to jointly infer matches with relationships.

\subsection{A Workflow of Our Approach}
\label{subsect:overview}

Given two KBs as input, Figure~\ref{fig:workflow} shows the workflow of our approach to crowdsourced collective ER. After iterating four processing stages, the approach returns a set of matches between the two KBs.

\begin{enumerate}
\item\textbf{ER graph construction} aims to construct a small ER graph by reducing the amount of vertices (i.e. entity pairs). It first conducts a similarity measurement  to filter out some non-matches. At the same time, it uses some matches obtained from exact matching~\cite{zhuang2017hike,julien2013sigma,zhuang2016pba} to calculate the similarities between attributes and find attribute matches. Then, based on the attribute matches, it assembles the similarities between values to similarity vectors, and leverages the natural partial order on the vectors to prune more vertices.

\item\textbf{Relational match propagation} models how to use matches to infer the match probabilities of unresolved entity pairs in each connected component of the ER graph. It first uses some matches and maximum likelihood estimation to measure the consistency of relationships. Then, based on the consistency of relationships and the ER graph structure, it computes the conditional match probabilities of unresolved entity pairs given the matches. The conditional match probabilities derive a probabilistic ER graph.

\item\textbf{Multiple questions selection} selects a set of unresolved entity pairs in the probabilistic ER graph as questions to ask workers. It models the discovery of inferred match set for each question as the all-pairs shortest path problem and uses a graph-based algorithm to solve it. We prove that the multiple questions selection problem is NP-hard and design a greedy algorithm to find the best questions to ask.

\item\textbf{Truth inference} infers matches based on the results labeled by workers. It first computes the posterior match probabilities of the questions based on the quality of the workers, and then leverages these posterior probabilities to update the (probabilistic) ER graph. Also, for isolated entity pairs, it builds a random forest classifier to avoid asking the workers to check them one by one.
\end{enumerate}

The approach stops asking more questions when there is no unresolved entity pair that can be inferred by relational match propagation.

\section{ER Graph Construction}
\label{sect:graph}
\subsection{ER Graph}
Graph structures~\cite{dong2005reference,papenbrock2015} are widely used to model the resolution states of entity pairs and the relationships between them. For example, Dong et al.~\cite{dong2005reference} proposed dependency graph to model the dependency between similarities of entity pairs. In this paper, we use the notion of \emph{ER graph} to denote this graph structure. Different from dependency graph, each edge in the ER graph is labeled with a pair of relationships from two KBs.

\begin{definition}[ER Graph]
Given two KBs $\kb_1=(\entset_1,\litset_1,\attrset_1,$ $\relset_1,\tripleset_1)$ and $\kb_2=(\entset_2,\litset_2,\attrset_2,\relset_2,\tripleset_2)$, an ER graph on $\kb_1$ and $\kb_2$ is a directed, edge-labeled multigraph $\ergraph=(\vertexset,\edgeset,l_e)$, such that (1) $\vertexset\subseteq \entset_1 \times \entset_2$; (2) for each vertex pair $(\ent_1,\ent_2), (\ent_1',\ent_2')\in V$, $\big((\ent_1,\ent_2), (\ent_1',\ent_2')\big)\in\ergraph\wedge l_v\big((\ent_1,\ent_2), (\ent_1',\ent_2')\big)=(\rel_1, \rel_2)$ if and only if $(\ent_1,\rel_1,\ent_1')\in \tripleset_1\wedge(\ent_2,\rel_2,\ent_2')\in \tripleset_2$.
\end{definition}

Figure~\ref{fig:graph} illustrates an ER graph fragment built from DBpedia and YAGO. Note that, an entity can occur in multiple vertices, and a relationship can appear in different edge labels. A \emph{probabilistic ER graph} is an ER graph where each edge $\big((\ent_1,\ent_2), (\ent_1',\ent_2')\big)$ is labeled with a conditional probability $\Pr(\ent_1'\simeq\ent_2' \,|\, \ent_1\simeq\ent_2)$. The major challenge of constructing an ER graph is how to significantly reduce the size of the graph while preserving as many potential entity matches as possible. 

\subsection{Candidate Entity Match Generation}

We conduct a string matching on entity labels (e.g., the values of \textit{rdfs:label}) to generate \emph{candidate} entity matches and regard them as vertices in the ER graph. Specifically, we first normalize entity labels via lowercasing, tokenization, stemming, etc. Then, we leverage the $\mathtt{Jaccard}$ coefficient---the size of the intersection divided by the size of the union of two sets---as our similarity measure to compute similarities on the normalized label token sets and follow the previous studies \cite{wang2013,chai2018power,wang2012} to prune the entity pairs whose similarities are less than a predefined threshold (e.g., 0.3). Although the choice of thresholds is dataset dependent, this process runs fast and largely reduces the amount of non-matches, thus helping the ER approaches scale up. Note that there are many choices on the similarity metric, e.g., $\mathtt{Jaccard}$, $\mathtt{cosine}$, $\mathtt{dice}$ and edit distance~\cite{sun2017}; our approach can work with any of them and we use $\mathtt{Jaccard}$ for illustration purpose only. The set of candidate entity matches is denoted by $\candidatematches$. 
Similar to \cite{zhuang2017hike}, we use the label similarities as prior match probabilities (i.e., $\mathrm{Pr}[\pmatch]$). More accurate estimation in \cite{whang2013,firmani2016} can be achieved by human labeling.

\subsection{Attribute Matching}
\label{subsect:attribute}

In $\candidatematches$, we refer to the subset of its entities that has exactly the same labels as \emph{initial} entity matches. We leverage them as a priori knowledge for attribute and relationship matching (see Sections~\ref{subsect:attribute} and \ref{subsect:relationship}). Other features, e.g., \textit{owl:sameAs} and inverse functional properties \cite{schneider2004owl}, may also be used to infer initial entity matches~\cite{hu2015,zhuang2016pba}. Note that we do not directly add initial entity matches in the final ER results, because they may contain errors. The set of initial entity matches is denoted by $\initialmatches$.

For such a set of initial entity matches $\initialmatches$ between two KBs $\mathcal{K}_1=(U_1,L_1,A_1,R_1,T_1)$ and $\mathcal{K}_2=(U_2,L_2,A_2,R_2,$ $T_2)$, we proceed to define the following attribute similarity to find their attribute matches. For any two attributes $a_1\in A_1$ and $a_2\in A_2$, their similarity $\mathtt{sim}(a_1, a_2)$ is defined as the average similarity of their values:
\begin{align}
    \mathtt{sim}_A(\attr_1, \attr_2) = \frac{\sum_{(\ent_1, \ent_2)\in \initialmatches} \mathtt{sim}_L(\valueset_1,\valueset_2)}{\big| \{(\ent_1,\ent_2)\in \initialmatches : \valueset_1\cup \valueset_2\neq \emptyset\} \big|},
\end{align}
where $\valueset_1=\{\lit_1: (\ent_1,\attr_1,\lit_1) \in T_1\}$ and $\valueset_2$ is defined analogously. $\mathtt{sim}_L$ represents an extended $\mathtt{Jaccard}$ similarity measure for two sets of literals, which employs an internal literal similarity measure and a threshold to determine two literals being the same when their similarity is not lower than the threshold~\cite{naumann2010}. For different types of literals, we use the $\mathtt{Jaccard}$ coefficient for strings and the maximum percentage difference for numbers (e.g., integers, floats and dates). The threshold is set to 0.9 to guarantee high precision. We refer interested readers to~\cite{cheatham2014} for more information about attribute matching.

For simplicity, every attribute in one KB is restricted to match at most one attribute in the other KB. This global 1:1 matching constraint is widely used in ontology matching~\cite{megdiche2016}, and facilitates our assembling of similarity vectors (later in  Section~\ref{sec:partial}). The 1:1 attribute matching selection is modeled as the bipartite graph matching problem and solved with the Hungarian algorithm~\cite{kuhn1955hungarian} in $O((|A_1| + |A_2|)^2|A_1||A_2|)$ time. The set of attribute matches is denoted by $\attrmatches$.

\subsection{Partial Order Based Pruning}
\label{sec:partial}

Given the candidate entity match set $\candidatematches$ and the attribute match set $\attrmatches$, for each candidate $(\ent_1,\ent_2)\in \candidatematches$, we create a similarity vector $\mathbf{s}(u_1, u_2)=(s_1,s_2,\ldots,s_{|M_{at}|})$, where $s_i$ is the literal similarity ($\mathtt{sim}_L$) between $u_1$ and $u_2$ on the $i^\textrm{th}$ attribute match ($1\le i\le |M_{at}|$). As a consequence, a natural partial order exists among the similarity vectors: $\mathbf{s} \succeq \mathbf{s}'$ if and only if $\forall 1\leq i\leq |M_{at}|, \mathbf{s}_i \ge \mathbf{s}_i'$. This partial order can be used to determine whether an entity pair is a (non-)match in two ways: (i) an entity pair $(\ent_1,\ent_2)$ is a match if there exists an entity pair $(\ent_1',\ent_2')$ such that $(\ent_1',\ent_2')$ is a match and $\mathbf{s}(\ent_1,\ent_2) \succeq \mathbf{s}(\ent_1',\ent_2')$; and (ii) $(\ent_1,\ent_2)$ is a non-match if there exists $(\ent_1',\ent_2')$ such that $(\ent_1',\ent_2')$ is a non-match and $\mathbf{s}(\ent_1',\ent_2') \succeq \mathbf{s}(\ent_1,\ent_2)$.

We incorporate this partial order into a $k$-nearest neighbor search for further pruning the candidate entity match set $M_c$. Let us assume that an entity $u_1$ in one KB has a set of candidate match counterparts $\{u_2^1,u_2^2,\dots,u_2^J\}$ in another KB. The similarity vectors are written as $\mathbf{s}(u_1,u_2^1), \mathbf{s}(u_1,u_2^2),\dots,$ $\mathbf{s}(u_1,u_2^J)$, and we want to determine the top-$k$ in them. Since the partial order is a weak ordering, we count the number of vectors strictly larger than each pair $(u_1, u_2^j)\ (1\le j \le J)$ as its ``rank'', i.e, the minimal rank in all possible  refined full orders. Note that the counterparts of entities in one entity pair are both considered. So, the worst rank of an entity pair $(u_1, u_2)$, denoted by $\mathtt{min\_rank}(u_1, u_2)$, is
\begin{align}
\begin{split}
    \mathtt{min\_rank}(u_1, u_2) &= \max_{i\in\{1,2\}} \mathtt{min\_rank}_i(u_1, u_2) \ , \\
    \mathtt{min\_rank}_1(u_1,u_2) &= \big| \{u_2':\mathbf{s}(u_1,u_2')\succ\mathbf{s}(u_1,u_2)\} \big| \ ,\\
    \mathtt{min\_rank}_2(u_1,u_2) &= \big| \{u_1':\mathbf{s}(u_1',u_2)\succ\mathbf{s}(u_1,u_2)\} \big| \ ,
\end{split}
\end{align}
where all $(u_1, u_2), (u_1,u_2'), (u_1',u_2) \in M_c$.

By $\mathtt{min\_rank}$, we design a modified $k$-nearest neighbor algorithm on this partial order (see Algorithm~\ref{algo:pop}). Because the full order among candidate entity matches is unknown, instead of finding the top-$k$ matches directly, we prune the ones that cannot be in top-$k$. Thus, each entity pair $(u_1, u_2) \in M_c$ such that $\mathtt{min\_rank}(u_1,u_2) \geq k$ needs to be pruned. Also, each pair smaller than a pruned pair should be removed based on the partial order to avoid redundant checking, because $\mathtt{min\_rank}$ of these pairs must be greater than $k$. The set of retained entity matches is denoted by $M_{rd}$, where each entity is involved in \emph{nearly} $k$ candidate matches, due to the weak ordering of partial order.

Algorithm~\ref{algo:pop} first partitions entity match set $M$ into each block $B$ where all pairs contain the same entity (Line 8). Then, it checks each entity pair $(u_1, u_2)\in B$, and prunes entity pairs such that $\mathtt{min\_rank} \geq k$ (Lines 10--12). Finally, the retained pairs in $B$ are added into the output match set.

Algorithm~\ref{algo:pop} first takes $\mathrm{O}(|M_c||M_{at}|)$ time to pre-compute the similarity vectors. When processing $U_i$ $(i=1,2)$, the pruning step (Lines 7--13) checks at most $|M_c|$ pairs, and each time it spends $\mathrm{O}(3|U_{3-i}||M_{at}|)$ time to compute $\mathtt{min\_rank}_i$, prune pairs in $\partition$ and store the retained pairs in $D$. So, the overall time complexity of Algorithm~\ref{algo:pop} is $\mathrm{O}\big(|M_c||M_{at}|(|U_1|+|U_2|)\big)$. In practice, similarity vector construction is the most time-consuming part, while the pruning step only needs to check a small amount of entities in $U_1$ or $U_2$.

\section{Relational Match Propagation}
\label{sect:propagate}

Given an ER graph $\ergraph=(\vertexset,\edgeset,l_v,l_e)$ and an entity match $\ent_1\simeq\ent_2$ in it, the relational match propagation infers how likely each unresolved entity pair $\pair\in\vertexset$ is a match based on the structure of $\ergraph$, i.e. $\mathrm{Pr}[\pmatch \,|\, \ent_1\simeq\ent_2]$. In this section, we first consider a basic case that unresolved entity pairs are neighbors of a match in $\ergraph$. Then, we generalize it to the case that unresolved pairs are reachable from several matches. In the basic case, we resolve entity pairs between two value sets of a relationship pair, and define the consistency between relationships to measure the portion of values containing matched counterparts in another value set. The consistency and the prior match probabilities of entity pairs are further combined to obtain ``tight'' posterior match probabilities. In the general case, we propose a Markov model on paths from matches to unresolved ones to find the match probability bounds.

\begin{algorithm}[t]
   {\footnotesize
      \KwIn{Candidate entity match set $M_c$, attribute match set $M_{at}$, threshold $k$}
      \KwOut{Retained entity match set $M_{rd}$}
      \SetKwProg{Fn}{Function}{}{}
      \SetKwFunction{PIOW}{PruningInOneWay}
   
      \lForEach{$(u_1,u_2)\in M_c$}{pre-compute $\mathbf{s}(u_1,u_2)$}
          $M_{rd} \leftarrow$ \PIOW{$M_c, U_1, k$}\;
          $M_{rd} \leftarrow$ \PIOW{$M_{rd}, U_2, k$}\;
          \Return $M_{rd}$\;
      \BlankLine
       
          \Fn{\PIOW{$M, U_i, k$}}{
         $D\leftarrow\emptyset$\;
              \ForEach{$u_i\in U_i$}{
                     $B \leftarrow \big\{(u_1, u_2)\in M : u_1=u_i\vee u_2=u_i\big\}$\;
            \lIf(\tcc*[f]{no need to prune}){$|B| \leq k$}{continue}
                     \ForEach{$(u_1, u_2)\in B$}{
                            \If{$\mathtt{min\_rank}_i(u_1,u_2)\geq k$} {
                  \tcc{$(u_1',u_2')$ cannot be pruned here}
                                $B \leftarrow \big\{(u_1',u_2')\in B : \mathbf{s}(u_1,u_2) \not\succeq \mathbf{s}(u_1',u_2')\big\}$\;
                            }
                     }
                     $D \leftarrow D \cup B$\;
              }
              \Return $D$;
          }
   }
   \caption{Partial order based pruning}
   \label{algo:pop}
   \end{algorithm}
   
\subsection{Consistency Between Relationships}
\label{subsect:relationship}

Functional/inverse functional properties are ideal for match propagation. For example, \emph{wasBornIn} is a functional property, and the born places of two persons in a match must be identical. However, we cannot just rely on functional/inverse functional properties, since many relationships are multi-valued and only a part of the values may match. Thus, we define the consistency between relationships as follows.

Let $\rel_1$ and $\rel_2$ be two relationships in two KBs . We assume that, given the condition that $\ent_1\simeq \ent_2 \wedge \ent_1'\in \neighborset_1$, the probability of the event $\exists \ent_2'~:~\big(\ent_2'\in \neighborset_2 \wedge \ent_1'\simeq \ent_2'\big)$ is subject to a binary distribution with parameter $\consistency_1$. Symmetrically, we define parameter $\consistency_2$. We use $\consistency_1$ and $\consistency_2$ to represent the consistency for two relationships $\rel_1$ and $\rel_2$, respectively:
\begin{align}
\resizebox{.9\columnwidth}{!}{$\begin{aligned}
\consistency_1 &= \mathrm{Pr}[\exists \ent_2' : \ent_2'\in \neighborset_2 \wedge \ent_1'\simeq \ent_2' \,|\, \ent_1\simeq \ent_2,\ent_1'\in \neighborset_1] , \\
\consistency_2 &= \mathrm{Pr}[\exists \ent_1' : \ent_1\in \neighborset_1 \wedge \ent_2'\simeq \ent_1' \,|\, \ent_2\simeq \ent_1,\ent_2'\in \neighborset_2] .
\end{aligned}$}
\end{align}
where $\neighborset_1,\neighborset_2$ are the value sets of relationships $\rel_1,\rel_2$ w.r.t. entities $\ent_1,\ent_2$, respectively. 

To estimate $\consistency_1$ and $\consistency_2$, we use the value distribution on the initial entity matches $\initialmatches$. For an entity pair $(\ent_1, \ent_2)\in \initialmatches$, we introduce a latent random variable $\neighbormatchcnt^{r_1,r_2}= |\neighbormatch^{r_1,r_2}|$, where $\neighbormatch^{r_1,r_2}$ denotes the set of entity matches in $\neighborset_1\times \neighborset_2$. Note that we omit $r_1, r_2$ in $\neighbormatchcnt^{r_1,r_2}$ and $\neighbormatch^{r_1,r_2}$ to simplify notations. Similar to~\cite{zhang2015}, we make an assumption on the entity sets: no duplicate entities exist in each entity set. Hence, $\neighbormatchcnt$ is also the number of entities in $\neighborset_1$ (or $\neighborset_2$) which appear in $\neighbormatch$. Based on the latent variable $\neighbormatchcnt$, the likelihood probability of $(\neighborset_1, \neighborset_2, \neighbormatchcnt)$ is
\begin{equation}
   \resizebox{.9\columnwidth}{!}{$\mathrm{Pr}[\neighborset_1,\neighborset_2,\neighbormatchcnt] 
   = \prod\limits_{i=1,2}\binom{| \neighborset_i |}{\neighbormatchcnt}(\frac{ \consistency_i}{1 - \consistency_i})^{\neighbormatchcnt}(1-\consistency_i)^{|\neighborset_i|}$.}
\end{equation}

Then, we use the maximum likelihood estimation to obtain $\consistency_1$ and $\consistency_2$:
\begin{align}
\label{eq:mle}
\max_{\consistency_1, \consistency_2, L_{\cdot,\cdot}} \prod_{(\ent_1, \ent_2) \in \initialmatches} \mathrm{Pr}[\neighborset_1,\neighborset_2,\neighbormatchcnt] .
\end{align}

Since each $\neighbormatchcnt$ is an integer variable, the brute-force optimization can cost exponential time. Next, we present an optimization process. Let $\zeta = \frac{\consistency_1 \consistency_2}{(1 - \consistency_1)(1 - \consistency_2)}$ and $\xi(\consistency_1, \consistency_2)=(1-\consistency_1)^{b_1} (1-\consistency_2)^{b_2}$, where $b_1=\sum|\neighborset_1|, b_2=\sum|\neighborset_2|$. We simplify (\ref{eq:mle}) to $\max_{\consistency_1, \consistency_2} \xi(\consistency_1, \consistency_2)\prod\max_{\neighbormatchcnt}c_{\neighbormatchcnt}\zeta^{\neighbormatchcnt}$, where $c_{\neighbormatchcnt} = \binom{| \neighborset_1 |}{\neighbormatchcnt}\binom{| \neighborset_2 |}{\neighbormatchcnt}$. Notice that $c_i \zeta^i = c_j \zeta^j$ has only one solution for different integers $i, j$. Thus, the curves $c_{\neighbormatchcnt}\zeta^{\neighbormatchcnt}$ ($0\leq \neighbormatchcnt\leq L_M$) can have at most $\binom{L_M + 1}{2}$ common points, where $L_M=\min\{|\neighborset_1|, |\neighborset_2|\}$. Therefore, $\max_{\neighbormatchcnt}c_{\neighbormatchcnt}\zeta^{\neighbormatchcnt}$ is an $\mathrm{O}(L_M^2)$-piecewise continuous function, and the product of these $\mathrm{O}(L_M^2)$-piecewise continuous functions is an $\mathrm{O}(\max\{|\neighborset_1|^4,|\neighborset_2|^4\})$-piecewise continuous function. As a result, we can optimize (\ref{eq:mle}) by solving $\mathrm{O}(\max\{|\neighborset_1|^4,|\neighborset_2|^4\})$ continuous optimization problems with two variables, which runs efficiently.

\subsection{Match Propagation to Neighbors}

A basic case is that the unresolved entity pairs are adjacent to a match $\ent_1\simeq\ent_2$ in $\ergraph$. We consider the neighbors with the same edge label, i.e. relationship pair $(\rel_1, \rel_2)$, together. Then, our goal is to identify matches between $\neighborset_1$ and $\neighborset_2$. 

Let $\neighbormatch\subseteq \neighborset_1\times \neighborset_2$ denote a set of entity matches. We consider two factors about how likely $\neighbormatch$ can be the correct match result of $\neighborset_1\times \neighborset_2$: (1) the prior match probabilities of matches without neighborhood information; (2) the consistency of the relationships. The match probability of $\neighbormatch$ given $\ent_1\simeq \ent_2$ is: 
\begin{align}
\mathrm{Pr}[\neighbormatch \,|\, \ent_1\simeq \ent_2] &= \frac{1}{Z}\, f(\neighbormatch \,|\, \neighborset_1,\neighborset_2) \notag\\ 
\times &\, g(\neighbormatch \,|\, \neighborset_1)\, g(\neighbormatch \,|\, \neighborset_2) ,
\end{align}
where $Z$ is the normalization factor. $f(\neighbormatch \,|\, \neighborset_1,\neighborset_2)$ is the prior match probability. $g(\neighbormatch \,|\, \neighborset_1), g(\neighbormatch \,|\, \neighborset_2)$ are the consistency of $\neighbormatch$ w.r.t.~$\neighborset_1,\neighborset_2$, respectively.

Without considering neighborhood information, the prior match probability $f(\neighbormatch \,|\, \neighborset_1,\neighborset_2)$ is defined as the likelihood function of $\neighbormatch$:
\begin{align}
\resizebox{.9\columnwidth}{!}{$\begin{aligned}
f(\neighbormatch \,|\, \neighborset_1,\neighborset_2) = \smashoperator[r]{\prod_{p\in \neighbormatch}}\mathrm{Pr}[\pmatch]\times \smashoperator[r]{\prod_{p\in \neighborset_1\times \neighborset_2\setminus \neighbormatch}}(1-\mathrm{Pr}[\pmatch]) ,
\end{aligned}$}
\end{align}
where $\mathrm{Pr}[\pmatch]$ denotes the prior probability of entity pair $\pair$ being a match, and $1-\mathrm{Pr}[\pmatch]$ denotes the prior probability of $\pair$ being a non-match.

Let $\pi_1(\neighbormatch)=\{\ent_1' \,|\, (\ent_1', \ent_2')\in \neighbormatch\}$. Note that when $\ent_1$ and $\ent_2$ form a match, each entity $\ent_1'\in\pi_1(\neighbormatch)$ is a neighbor of $\ent_1$ for relationship $\rel_1$ such that $\exists \ent_2':\ent_2'\in\neighborset_2\wedge\ent_2'\simeq \ent_1'$. Based on $\consistency_1$, the consistency of $\neighbormatch$ given $\neighborset_1$ is defined as follows:
\begin{align}
\resizebox{.9\columnwidth}{!}{$\begin{aligned}
g(\neighbormatch \,|\, \neighborset_1) = \consistency_1^{|\pi_1(\neighbormatch)|}\, (1-\consistency_1)^{|\neighborset_1| - |\pi_1(\neighbormatch)|} .
\end{aligned}$}
\end{align}
$\pi_2(\neighbormatch)$ and $g(\neighbormatch \,|\, \neighborset_2)$ can be defined similarly.

Finally, we obtain the posterior match probability of $\ent_1'\simeq \ent_2'$ by marginalizing $\mathrm{Pr}[\ent_1'\simeq \ent_2', \neighbormatch \,|\, \ent_1\simeq \ent_2]$:
\begin{align}
\label{eq:mp_neighbor}
\resizebox{\columnwidth}{!}{$\begin{aligned}
&\mathrm{Pr}[\ent_1'\simeq \ent_2' \,|\, \ent_1\simeq \ent_2]  \\
&= \smashoperator{\sum\limits_{\neighbormatch}} \mathrm{Pr}[\ent_1'\simeq \ent_2', \neighbormatch \,|\, \ent_1\simeq \ent_2] = \smashoperator{\sum_{\neighbormatch :\, (\ent_1', \ent_2')\in \neighbormatch}}\mathrm{Pr}[\neighbormatch \,|\, \ent_1\simeq \ent_2], 
\end{aligned}$}
\end{align}
where $\neighbormatch$ is selected over $(\neighborset_1\times \neighborset_2)\cap \vertexset$.

\noindent\textbf{Example.} Let $(\ent_1, \ent_2)=(\textit{y:Tim}, \textit{d:Tim})$, $\rel_1$ and $\rel_2$ denote the relationship \textit{directed}, $\consistency_1 = \consistency_2 = 0.9$, and $\mathrm{Pr}[\pmatch] \equiv 0.5$ (implying all pairs are viewed as the same). From Figure~\ref{fig:graph}, we can find that $\neighborset_1=\{\textit{y:Cradle}, \textit{y:Player}\}$ and $\neighborset_2=\{\textit{d:Cradle}, \textit{d:Player}\}$. Thus, when $\neighbormatch = \{(\textit{y:Cradle}, \textit{d:Cradle}), (\textit{y:Player}, \textit{d:Player})\}$, $\mathrm{Pr}[\neighbormatch\,|\,\ent_1\simeq\ent_2]=0.5^3\times 0.95^4\approx 0.1$; when $\neighbormatch' = \{(\textit{y:Cradle},\textit{d:Player})\}$, $\mathrm{Pr}[\neighbormatch'\,|\,\ent_1\simeq\ent_2]=0.5^3\times 0.95^2\times 0.05^2\approx 0.0003$. So, $\neighbormatch$ is more likely to be the match set within $\neighborset_1\times\neighborset_2$. Furthermore, $\mathrm{Pr}[\textit{y:Cradle}\simeq\textit{d:Cradle}]\approx 0.99$, whereas $\mathrm{Pr}[\textit{y:Cradle}\simeq\textit{d:Player}]\approx 0.01$.
\subsection{Distant Match Propagation} 

The above match propagation to neighbors only estimates the match probabilities of direct neighbors of an entity match, which lacks the capability of discovering entity matches far away. In the following, we extend it to a more general case, called \emph{distant match propagation}, where a match reaches an unresolved entity pair through a path.

Intuitively, given a match $(\ent_1,\ent_2)$ and an unresolved pair $(\ent_1',\ent_2')$, the distant propagation process can be modeled as a path consisting of the entity pairs from $(\ent_1,\ent_2)$ to $(\ent_1',\ent_2')$, where each unresolved pair can be inferred as a match via its precedent. Assume that there is a path $(\ent_1^0,\ent_2^0), (\ent_1^1,\ent_2^1),$ $\dots,(\ent_1^l,\ent_2^l)$ in $\ergraph$, where $(\ent_1^0,\ent_2^0)=(\ent_1,$ $\ent_2)$ and $(\ent_1^l,\ent_2^l)=(\ent_1',\ent_2')$. According to the chain rule of conditional probability, we have
\begin{align}
\label{eq:chain}
\begin{split}
&\mathrm{Pr}[\ent_1^l\simeq \ent_2^l \,|\, \ent_1^0\simeq \ent_2^0] \\
&\geq \mathrm{Pr}[\ent_1^l\simeq \ent_2^l,\ent_1^2\simeq \ent_2^2,\dots,\ent_1^l\simeq \ent_2^l \,|\, \ent_1^0\simeq \ent_2^0] \\
&= \prod\nolimits_{i=1}^l\mathrm{Pr}[\ent_1^i\simeq \ent_2^i \,|\, \ent_1^0\simeq \ent_2^0,\dots,\ent_1^{i-1}\simeq \ent_2^{i-1}] \\
&= \prod\nolimits_{i=1}^{l}\mathrm{Pr}[\ent_1^i \simeq \ent_2^i \,|\, \ent_1^{i-1} \simeq \ent_2^{i-1}],
\end{split}
\end{align}
where the last ``='' holds because we assume that this propagation path satisfies the Markov property~\cite{rastogiL2011mln}. Inequation (\ref{eq:chain}) gives a lower bound for $\mathrm{Pr}[\ent_1^l\simeq \ent_2^l \,|\, \ent_1^0\simeq \ent_2^0]$.
The largest lower bound is selected to estimate $\mathrm{Pr}[\ent'_1\simeq \ent'_2 \,|\, \ent_1\simeq \ent_2]$. We estimate $\mathrm{Pr}[\ent'_1\simeq \ent'_2 \,|\, \ent_1\simeq \ent_2]$ in Algorithm~\ref{algo:imsd}.



\section{Multiple Questions Selection}
\label{sect:question}

Based on the relational match propagation, unresolved entity pairs can be inferred by human-labeled matches. However, different questions have different inference capabilities. In this section, we first describe the definition of inferred match set and the multiple questions selection problem. Then, we design a graph-based algorithm to determine the inferred match set for each question. Finally, we formulate the benefit of multiple questions and design a greedy algorithm to select the best questions.

\subsection{Question Benefits}

We follow the so-called \emph{pairwise question interface}~\cite{wang2013,whang2013,firmani2016,zhuang2017hike,vesdapunt2014,verroios2015}, where each question is whether an entity pair is a match or not. 
Let $\questionset$ be a set of pairwise questions. Labeling $\questionset$ can be defined as a binary function $\answerset : \questionset \rightarrow \{0,1\}$, where for each question $\question \in \questionset$, $\answerset(\questionset)=1$ means that $\question$ is labeled as a match, while $\answerset(\question)=0$ indicates that $\question$ is labeled as a non-match. 

Given the labels $\answerset$, we propagate the labeled matches in $H$ to unresolved pairs. The set of entity pairs that can be inferred as matches by $\answerset$ is
\begin{align}
\label{eq:benefit_H}
	\inferred(\answerset) & = \bigcup\nolimits_{\question\in\questionset : \answerset(\question)=1}\inferred(\question) , \\
	\inferred(\question) & = \{ \pair\in \unresolved : \mathrm{Pr}[\pmatch \,|\, \qmatch]\ge\tau \},\label{eq:inf_q}
\end{align}
where $\unresolved$ is the unresolved entity pairs and $\tau$ is the precision threshold for inferring high-quality matches. We evaluate $\inferred(\question)$ in Section~\ref{sec:discovery}.


Since non-matches are quadratically more than matches in the ER problem~\cite{getoor2012}, the labels to the ideal questions should infer as many matches as possible. Thus, we define the benefit function of $\questionset$ as the expected number of matches can be inferred by labels to $\questionset$, which is
\begin{align}
	\benefit(\questionset) =\mathtt{E}\big[|\inferred(\answerset)| \,\big|\, \questionset\big] .
\end{align}

The ER algorithm can ask each question with the greatest $\benefit$ iteratively; however, there is a latency caused by waiting for workers to finish the question. Assigning multiple questions to workers simultaneously in one human-machine loop is a straightforward optimization to reduce the latency. Since workers in crowdsourcing platforms are paid based on the number of solved questions, the number of questions should be smaller than a given budget. Thus, the \emph{optimal multiple questions selection} problem is to
\begin{align}
\begin{split}
\mathrm{maximize}\quad & \benefit(\questionset) , \\
\mathrm{s.t.\ \ \ \ }\quad & \questionset \subseteq \unresolved,\enskip  |\questionset| \leq \qlimit ,
\end{split}
\end{align}
where $\qlimit$ is the constraint on the number of questions asked.

\subsection{Discovery of Inferred Match Set}
\label{sec:discovery}
In order to obtain the $\benefit$ for each question set $\questionset$, we need to compute $\inferred(\question)$ for each $q \in Q$. To estimate  $\mathrm{Pr}[\pmatch\,|\,\qmatch]$ in $\inferred(\question)$, we define the length of a directed edge $(\vertex,\vertex')$ in probabilistic ER graph $\pergraph$ as $\mathtt{length}(\vertex,\vertex') = -\log f(\vertex,\vertex')=-\log\mathrm{Pr}[\vvmatch\,|\,\vmatch]$. According to the definition of $\mathrm{Pr}[\pmatch\,|\,\qmatch]$, $\mathrm{Pr}[\pmatch\,|\,\qmatch]=e^{\mathtt{dist}(q,p)}$, where $\mathtt{dist}(q,p)$ is the distance of the shortest path from $q$ to $p$. As a result, the condition $\mathrm{Pr}[\pmatch \,|\, \qmatch]\geq \tau$ can be interpreted as $\mathtt{dist}(q,p)\leq \zeta=-\log\tau$. Note that edge $(\vertex,\vertex')$ can be removed when $\mathrm{Pr}[\vvmatch\,|\,\vmatch] = 0$ to avoid $\log 0$.

The all-pairs shortest path algorithms can efficiently compute $\inferred(\question)$ for every $\question$. Since most $|\inferred(\question)|$ should be smaller than $|\unresolved|$, we choose to apply binary trees rather than an array of size $|\unresolved|$ to maintain distances. We depict our modified Floyd-Warshall algorithm in Algorithm~\ref{algo:imsd}. In Lines 1--2, for every $\question$, we create a binary tree $\mathtt{bt}(\question)$ to store the inferred pairs as well as their corresponding lengths, and a binary tree $\mathtt{bt}^{-1}(\question)$ to store pairs inferring $\question$ as well as their corresponding lengths. In Lines 3--5, the edge whose length is not greater than $\zeta$ would be stored into binary trees. In Lines 6--11, we modify the dynamic programming process in the original Floyd–Warshall algorithm. Since the number of pairs which can be inferred is significantly less than $|\unresolved|$, the inner loop in Lines 9--11 iterate only over the set of distances which are likely to be updated. Lines 13--14 extract the inferred match sets from binary trees. 

Since each binary tree contains at most $|C|$ elements, $|R| \leq |\mathtt{bt}(\question).val| \leq |\unresolved|$. The loop in Lines 6--11 takes $\mathrm{O}(|\unresolved|^3)$ time in total. The time complexity of Algorithm~\ref{algo:imsd} is $\mathrm{O}(|\unresolved|^3)$.
 
\begin{algorithm}[t]
{\footnotesize
\KwIn{Probabilistic ER graph $\pergraph$, candidate question set $\unresolved$, distance threshold $\zeta$}
\KwOut{Set $B$ of inferred match sets for all questions}
\ForEach{$q\in \unresolved$}{
	Initialize two empty binary trees $\mathtt{bt}(q),\mathtt{bt}^{-1}(q)$\;
}
\ForEach{$(q,p)\in \unresolved\times \unresolved$}{
	\If{$\mathtt{length}(q,p)\leq\zeta$}{
		$\mathtt{bt}(q)[p] \leftarrow \mathtt{length}(q,p); \mathtt{bt}^{-1}(p)[q] \leftarrow \mathtt{length}(p,q)$\;
    	}
}
\ForEach{$q\in \unresolved$}{
	\ForEach{$\pair\in\mathtt{bt}(q).val$}{
		$R\leftarrow\{r\in\mathtt{bt}^{-1}(q).val : \mathtt{bt}(q)[p]+\mathtt{bt}^{-1}(q)[r]\leq\zeta\}$\;
		\ForEach{$r\in R$}{
        	       		$d\leftarrow\mathtt{bt}(q)[p] + \mathtt{bt}^{-1}(q)[r]$\;
        	        		$\mathtt{bt}(q)[r] \leftarrow d; \mathtt{bt}^{-1}(r)[p] \leftarrow d$\;
    	        }
	}
}
$B\leftarrow\emptyset$\;
\ForEach{$q\in \unresolved$}{
	$\inferred(\question)\leftarrow\mathtt{bt}(q).val; B\leftarrow B\cup\{\inferred(\question)\}$\;
}
\Return $B$\;
}
\caption{DP-based inferred match set discovery}
\label{algo:imsd}
\end{algorithm}

\subsection{Multiple Questions Selection}

Since the match propagation works independently for each label, the event that an entity pair $\pair$ is inferred as a match by labels $\answerset$ is equivalent to the event that $\pair$ is inferred by $\question\in\questionset$ such that $\answerset(\question) = 1$. When $\answerset$ is not labeled, $\pair$ is resolved as a match if and only if at least one question that can resolve $\pair$ as a match is labeled as a match. Given the question set $\questionset$, the probability that $\pair$ can be resolved as a match by labels is
\begin{align}
\mathrm{Pr}[\pair\in\inferred(\answerset) \,|\, \questionset] = 1-\quad\smashoperator{\prod_{\question\in\questionset : \pair\in\inferred(\question)}}(1-\mathrm{Pr}[\qmatch]) ,
\end{align}
where $\inferred(\answerset)$ is defined in Eq.~(\ref{eq:benefit_H}), representing the matches that can be inferred after $\questionset$ is labeled by workers. 

The benefit of question set $\questionset$ is formulated as the expected size of the inferred matches by labels $\answerset$:
\begin{align}
\label{eq:benefit_Q}
\begin{split}
\benefit(\questionset) & = \mathtt{E}\big[|\inferred(\answerset)| \,\big|\, \questionset\big] \\
 & = \sum_{\pair\in \unresolved}\mathrm{Pr}\Big[\pair\in \inferred(\answerset) \,|\, \questionset\Big].
\end{split}
\end{align}

Now, we want to select a set of questions that can maximize the benefit. We first prove the hardness of the multiple questions selection problem. Then, we describe a greedy algorithm to solve it.
\begin{theorem}
\label{the:np_hard}
The problem of optimal multiple questions selection is NP-hard.
\end{theorem}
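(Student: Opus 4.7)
The plan is to reduce from the maximum $k$-coverage problem, which is well known to be NP-hard: given a ground set $X$, a family of subsets $\mathcal{S} = \{S_1, \ldots, S_n\}$ of $X$, and an integer $k$, find $k$ members of $\mathcal{S}$ whose union has maximum size. The key observation enabling the reduction is that if every selected question has prior match probability $1$, then the product $\prod_{\question \in \questionset \,:\, \pair \in \inferred(\question)}(1 - \mathrm{Pr}[\qmatch])$ equals $0$ whenever $\pair$ lies in the inferred set of some chosen question and $1$ otherwise, so $\benefit(\questionset)$ collapses to the unweighted coverage $|\bigcup_{\question \in \questionset} \inferred(\question)|$.

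From an instance of max $k$-coverage I would build an instance of multiple questions selection as follows. Create one candidate unresolved pair $\pair_x$ for every $x \in X$ and one candidate question $\question_i$ for every $S_i \in \mathcal{S}$, and put all of them into $\unresolved$. Arrange the probabilistic ER graph so that $\question_i$ has a single directed edge to each $\pair_x$ with $x \in S_i$ of length $0$ (i.e.\ conditional match probability $1$), while no other pair of vertices is connected by any path of total length $\le \zeta = -\log\tau$; in particular the $\pair_x$ are sinks with no outgoing edges. Set $\mathrm{Pr}[\qmatch]=1$ for every $\question_i$, set the prior match probability of every $\pair_x$ to $0$ so that the $\pair_x$ are dominated as question choices, and set $\qlimit = k$. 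By construction $\inferred(\question_i) = \{\question_i\} \cup \{\pair_x : x \in S_i\}$ and $\inferred(\pair_x) = \{\pair_x\}$.

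Under this reduction a short case analysis over the two kinds of vertices gives $\benefit(\questionset) = |\questionset \cap \{\question_1, \ldots, \question_n\}| + |\bigcup_{\question_i \in \questionset} S_i|$, so any optimum selects $\questionset$ entirely from $\{\question_1, \ldots, \question_n\}$ and attains value $k + \mathrm{OPT}_{\mathrm{cover}}$. A polynomial-time algorithm for optimal multiple questions selection therefore solves max $k$-coverage in polynomial time, proving NP-hardness. The main obstacle I anticipate is justifying that the prescribed probabilistic ER graph is realizable as the actual output of the pipeline in Sections~\ref{sect:graph} and \ref{sect:propagate} on two concrete KBs; this can be handled by planting, for each reduction edge $\question_i \to \pair_x$, fabricated entities linked by an inverse-functional relationship so that the maximum-likelihood estimator of Section~\ref{subsect:relationship} returns $\consistency_1 = \consistency_2 = 1$ and the resulting single-hop conditional probability equals $1$, while placing each question in a disjoint component of the ER graph to prevent any spurious match propagation between unrelated questions.
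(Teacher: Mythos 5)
Your reduction is correct and uses essentially the same gadget as the paper: a bipartite star construction with one question vertex per set, one pair vertex per element, conditional probability $1$ along set--element edges and $0$ in the reverse direction, so that $\inferred(\question_i)$ recovers $S_i$ and $\benefit$ collapses to coverage. The one genuine difference is the source problem: the paper reduces from (minimum) set cover, while you reduce from maximum $k$-coverage. Your choice is actually the better fit, because optimal multiple questions selection is a budgeted maximization ($|\questionset|\leq\qlimit$), so a single call to its oracle with $\qlimit=k$ directly yields the max-coverage optimum; the paper's set-cover phrasing implicitly needs an extra step (e.g., searching over $\qlimit$ and testing whether the maximum benefit reaches $n$) to recover the minimum cover, which it glosses over. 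You are also more careful about the additive $|\questionset\cap\{\question_1,\dots,\question_n\}|$ term arising from each question inferring itself, and about zeroing out the priors of the element pairs so they are never worth asking --- details the paper's proof leaves implicit. Your closing worry about realizability of the gadget as the output of the actual pipeline is more than is required: the problem is stated over an arbitrary probabilistic ER graph with given priors, so hardness of that abstract problem is what the theorem asserts, and the paper makes no attempt to realize its gadget from concrete KBs either.
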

\begin{proof}
The optimization version of the set cover problem is NP-hard. Given an element set $U=\{1,2,\dots,n\}$ and a collection $S$ of sets whose union equals $U$, the set cover problem aims to find the minimum number of sets in $S$ whose union also equals $U$. This problem can be reduced to our multiple questions selection problem in polynomial time. Assume that the vertex set of an ER graph is $\{p_1,p_2,\dots,$ $p_n\} \cup \{p_s \,|\, s\in S\}$, the edge set is $\{(p_s,p_k) : k=1,2,\dots,n \wedge s\in S \wedge k\in s\}$, all the prior match probabilities are $1$, the precision threshold is $1$, $\mathrm{Pr}[p_k \,|\, p_s]=1$ and $\mathrm{Pr}[p_s \,|\, p_k]=0$, for all $k=1,2,\dots, n, s\in S$ satisfying $k\in s$. Because the benefit is equal to the number of covered elements in $U$, the optimal solution of the multiple questions selection problem is also that of the set cover problem. Thus, the multiple questions selection problem is NP-hard.
\end{proof}

\begin{theorem}
\label{the:submod}
$\benefit(Q)$ is an increasing submodular function.
\end{theorem}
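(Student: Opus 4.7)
The plan is to prove monotonicity and submodularity separately by exploiting the decomposition of $\benefit(\questionset)$ as a sum of probabilistic coverage terms. Writing
\[
\benefit(\questionset) = \sum_{\pair\in\unresolved} F_\pair(\questionset), \quad F_\pair(\questionset) = 1 - \!\!\prod_{\question\in\questionset,\,\pair\in\inferred(\question)}\!\!(1-\mathrm{Pr}[\qmatch]),
\]
it suffices to show every $F_\pair$ is monotone and submodular, since non-negative sums preserve both properties.

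First I would establish monotonicity. Given $\questionset \subseteq \questionset'$, the product defining $1 - F_\pair(\questionset')$ contains every factor of $1 - F_\pair(\questionset)$ plus possibly extra factors lying in $[0,1]$. Hence the product is non-increasing in $\questionset$, so $F_\pair$ is non-decreasing; summing over $\pair$ yields monotonicity of $\benefit$.

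Next I would prove submodularity. Fixing $\pair$, take $A \subseteq B \subseteq \unresolved$ and $x \in \unresolved \setminus B$. Let $A_\pair = \{\question \in A : \pair \in \inferred(\question)\}$ and define $B_\pair$ analogously; clearly $A_\pair \subseteq B_\pair$. If $\pair \notin \inferred(x)$, both marginal gains vanish and the inequality holds trivially. Otherwise, a direct calculation yields
\[
F_\pair(A\cup\{x\}) - F_\pair(A) = \mathrm{Pr}[\match_x]\,\prod_{\question\in A_\pair}\bigl(1 - \mathrm{Pr}[\qmatch]\bigr),
\]
and the analogous expression with $A$ replaced by $B$. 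Since $A_\pair \subseteq B_\pair$ and every factor lies in $[0,1]$, the product over $A_\pair$ dominates that over $B_\pair$, giving the diminishing-returns inequality $F_\pair(A\cup\{x\}) - F_\pair(A) \geq F_\pair(B\cup\{x\}) - F_\pair(B)$. Summing over $\pair \in \unresolved$ preserves submodularity and closes the argument.

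The main (and rather minor) obstacle is being explicit about the probabilistic model: the product formula presupposes that the label events $\answerset(\question)=1$ for distinct $\question$ are independent Bernoulli trials with success probability $\mathrm{Pr}[\qmatch]$, and that the inferred-match set for $\answerset$ is the union of $\inferred(\question)$ over all $\question$ labeled as matches---both assumptions are already baked into Eqs.~(\ref{eq:benefit_H}) and (\ref{eq:benefit_Q}). Once these semantics are granted, the statement reduces to the textbook monotone-submodularity of probabilistic coverage, which is what the argument above exploits.
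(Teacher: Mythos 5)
Your proof is correct and follows essentially the same route as the paper: both decompose $\benefit(\questionset)$ into per-pair probabilistic coverage terms $1-\prod(1-\mathrm{Pr}[\qmatch])$ and verify monotonicity and submodularity termwise, the only difference being that you check the diminishing-returns form for general nested sets $A\subseteq B$ while the paper checks the equivalent four-point condition on $Q$, $Q\cup\{q_1\}$, $Q\cup\{q_2\}$. No gap to report.
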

\begin{proof}
Let $\mathtt{b}_p(Q)$ represent $\mathrm{Pr}[p \in \inferred(H) \,|\, Q]$. For every $p \in C$ and two disjoint subsets $Q, Q' \subseteq C$, we have 
\begin{align}
    \mathtt{b}_p(Q \cup Q') = \mathtt{b}_p(Q) + \mathtt{b}_p(Q') - \mathtt{b}_p(Q)\mathtt{b}_p(Q') \ . \notag
\end{align}

Thus, $\mathtt{b}_p(Q \cup Q') - \mathtt{b}_p(Q) = \mathtt{b}_p(Q') \big(1 - \mathtt{b}_p(Q')\big) \geq 0$. Since $\mathtt{benefit}(Q) = \sum_{p \in C} \mathtt{b}_p(Q)$, it is an increasing function.

Also, for every $p \in C, Q \subseteq C$ and $q_1, q_2 \in C \setminus Q$ such that $q_1 \neq q_2$, we have
\begin{align}
\resizebox{\columnwidth}{!}{$\begin{aligned}
& \mathtt{b}_p(Q \cup \{q_1,q_2\}) + \mathtt{b}_p(Q) - \mathtt{b}_p(Q \cup \{q_1\}) - \mathtt{b}_p(Q \cup \{q_2\}) \\
& = b_p(\{q_2\}) \big(\mathtt{b}_p(Q) - \mathtt{b}_p(Q \cup \{q_1\})\big) \leq 0 .
\end{aligned}$} \notag
\end{align}

Thus, $\mathtt{b}_p(Q \cup \{q_1\}) + \mathtt{b}_p(Q \cup \{q_2\}) \geq \mathtt{b}_p(Q \cup \{q_1,q_2\}) + \mathtt{b}_p(Q)$. Since $\mathtt{benefit}(Q) = \sum_{p \in C} \mathtt{b}_p(Q)$, it is a submodular function. Together, we prove that $\mathtt{benefit}(\cdot)$ is an increasing submodular function.
\end{proof}

Since Eq.~(\ref{eq:benefit_Q}) is monotonic and submodular, the multiple questions selection problem can be solved by using submodular optimization. We design Algorithm~\ref{algo:mqs}, which gives a $(1-\frac{1}{e})$-approximation guarantee. This algorithm selects questions greedily with the highest gain in benefits (i.e. $\qdelta$). We also leverage the lazy evaluation of the submodular function to improve the efficiency~\cite{lazy}. Specifically, we maintain a priority queue $PQ$ over each candidate question $\question$ ordered by the gain in benefits $\qdelta$ in descending order. Based on the submodular property, when the gain in benefits $\qdelta$ of the picked question $\question$ is greater than that of the top question $\question'$ in $PQ$, $\question$ is the question with the largest gain in benefits. We use an array to store $b_p(\questionset)$, such that $\qdelta$ can be obtained in $\mathrm{O}(|\unresolved|)$ time. The overall time complexity of Algorithm~\ref{algo:mqs} is $\mathrm{O}(\mu |\unresolved|^2)$, where $\mu$ is the number of questions asked in each loop and $\unresolved$ is the set of unresolved entity pairs in the ER graph.

\begin{algorithm}[t]
{\footnotesize
\KwIn{Probabilistic ER graph $\pergraph$, candidate question set $\unresolved$, precision threshold $\tau$, question number $\mu$}
\KwOut{Selected question set $\questionset$}
$\questionset\leftarrow\emptyset;$ $PQ\leftarrow\{\big(\question, \benefit(\{\question\})\big)\,|\,\question\in\questionset\}$\;
\While{$|\questionset|<\mu$}{
	$\question,\qdelta\leftarrow PQ.pop(); \question',\qdelta'\leftarrow PQ.top()$\;
	\While{$\qdelta > 0$}{
		$\qdelta\leftarrow\benefit(\questionset\cup\{\question\})-\benefit(\questionset)$\;
		\leIf{$\qdelta \geq \qdelta'$}{
			$\questionset\leftarrow \questionset\cup\{\question\}; $ \textbf{break}\;
		}{$PQ.push\big((\question,\qdelta)\big)$}
		$\question,\qdelta\leftarrow PQ.pop(); \question',\qdelta'\leftarrow PQ.top()$\;
	}
	\lIf{$\qdelta\leq 0$}{\textbf{break}}
}
\Return $\questionset$\;
}
\caption{Greedy multiple questions selection}
\label{algo:mqs}
\end{algorithm}

\section{Truth Inference}
\label{sect:inference}

After the questions are labeled by workers, we design an error-tolerant model to infer truths (i.e. matches and non-matches) from the imperfect labeling, which facilitates updating the (probabilistic) ER graph and resolving isolated entities.

\subsection{Error-Tolerant Inference}

As the labels completed by the workers on crowdsourcing platforms may contain errors, we assign one question to multiple workers and use their labels to infer the posterior match probabilities. We leverage the worker probability model~\cite{zheng2017}, which uses a single real number to denote a worker $\worker$'s quality $\lambda^\worker \in (0, 1]$, i.e. the probability that $\worker$ can correctly label a question. Since crowdsourcing platforms, e.g., Amazon MTurk\footnote{\url{https://www.mturk.com/}} offers a qualification test for their workers, we reuse a worker's precision in this test as her quality. The posterior probability of question $\question$ being a match is
\begin{align}
\label{eq:post}
& \mathrm{Pr}[\qmatch \,|\, \trueworkers, \falseworkers] \notag\\
&= \frac{\mathrm{Pr}[\qmatch]\,\mathrm{Pr}[\trueworkers, \falseworkers \,|\, \qmatch]}{\mathrm{Pr}[\qmatch] \, \mathrm{Pr}[\trueworkers, \falseworkers \,|\, \qmatch] + \mathrm{Pr}[\overline{\qmatch}] \, \mathrm{Pr}[\trueworkers, \falseworkers \,|\, \overline{\qmatch}]} \notag\\
&= \frac{\mathrm{Pr}[\qmatch]}{\mathrm{Pr}[\qmatch] + \mathrm{Pr}[\overline{\qmatch}] \prod\limits_{\worker\in\trueworkers} \frac{1 - \lambda^\worker}{\lambda^\worker} \prod\limits_{\worker\in\falseworkers} \frac{\lambda^\worker}{1 - \lambda^\worker}} ,
\end{align}
where $\trueworkers$ denotes the set of workers labeling $\question$ as a match, and $\falseworkers$ denotes the set of workers labeling $\question$ as a non-match.

We assign two thresholds to filter matches and non-matches based on consistent labels. Entity pairs with a high posterior probability (e.g., $\ge 0.8$) are regarded as matches, while pairs with a low posterior probability (e.g., $\le 0.2$) are non-matches. Others are considered as inconsistent and remain unresolved. One possible reason for the inconsistency is that these questions are too hard. For a hard question $q$, we set $\mathrm{Pr}[\qmatch]$ to $\mathrm{Pr}[\qmatch \,|\, \trueworkers, \falseworkers]$ for reducing its benefit, thereby it is less possible to be asked more times. Next, we infer matches based on the consistent labels and re-estimate the probability of each edge in $\pergraph$ using new matches and non-matches.

\subsection{Inference for Isolated Entity Pairs}

As an exception, there may exist a small amount of isolated entity pairs which do not occur in any relationship triples. In this case, the match propagation cannot infer their truths, and the question selection algorithm has to ask these pairs one by one. To avoid such an inefficient polling, we reuse the similarity vectors and the partial order relations obtained in Section~\ref{sect:graph} to train a classifier for these isolated pairs.

Given an isolated entity pair $\pair$, let $\attrset_\pair$ denote the set of its attribute matches. We define the set $\pairpartition$ of retained matches with similar attributes to $\pair$ by $\pairpartition = \{ \pair' \in M_{rd} : \mathtt{Jaccard}(\attrset_\pair, \attrset_{\pair'})$ $\ge \psi \}$, where $\mathtt{Jaccard}$ calculates the similarity between two sets of attribute matches. $\psi$ is a threshold, and we set $\psi=0.9$ for high precision. Since we only allow matches to propagate in the ER graph, most obtained labels are matches. Therefore, we treat all unresolved pairs in $\pairpartition$ as non-matches to balance the proportions of different labels. 

Next, we use $\pairpartition$ and the labels as training data, and scikit-learn\footnote{\url{https://scikit-learn.org}} to train a random forest classifier with default parameter to predict whether $\pair$ is a match. The random forest finds the unresolved pairs in $\pairpartition$ whose similarity vectors are close to known matches.

\section{Experiments and Results}\label{sect:exp} 
In this section, we conduct a thorough evaluation on the effectiveness of our approach Remp, by comparing with state-of-the-art methods followed by an in-depth investigation on each part of Remp (as outlined in Section~\ref{subsect:overview}). 

\noindent\textbf{Datasets.} We use one benchmark dataset and three real-world datasets widely used in previous work~\cite{zhuang2017hike,chai2018power,suchanek2011paris,julien2013sigma}. Table~\ref{tab:stats} lists their statistics.

\begin{itemize}
\item IIMB is a small, synthetic benchmark dataset in OAEI\footnote{\url{http://islab.di.unimi.it/content/im_oaei/2019/}} containing two KBs with identical attributes and relationships. 

\item DBLP-ACM (abbr. D-A)\footnote{\url{https://dbs.uni-leipzig.de/en/research/projects/object_matching}} is a dataset about publications and authors. The original version uses a text field to store all authors of a publication. Here, we split it and create authorship triples. In the case that an author has multiple representations on the original dataset, we follow \cite{thor2007moma} to extend the gold standard with author matches.

\item IMDB-YAGO (abbr. I-Y) is a large dataset about movies and actors. Following \cite{julien2013sigma}, we generate the gold standard based on ``external links'' in Wikipedia pages. 

\item DBpedia-YAGO (abbr. D-Y) is a large dataset with heterogeneous attributes and relationships. We use the same version as in \cite{zhuang2017hike,suchanek2011paris}. 
\end{itemize}

\begin{table}
\centering
\caption{Statistics of the datasets}
\label{tab:stats}
\begin{tabular}{|c|cccc|}
	\hline 	& \#Entities & \#Attributes & \#Relationships & \#Matches \\
	\hline 	IIMB & 365 / 365 & 12 / 12  & 15 / 15 & 365 \\
			D-A & 2.61K / 64.3K & 3 / 3 & 1 / 1 & 5.35K \\
		         I-Y & 15.1M / 3.04M & 14 / 36 & 15 / 33 & 77K \\
		         D-Y & 3.12M / 3.04M & 684 / 36 & 688 / 33 & 1.31M \\
	\hline 
\end{tabular}
\end{table}

\begin{table}
\centering
\caption{F1-score and number of questions with real workers}
\label{tab:real}
\begin{adjustbox}{width=\columnwidth}
\begin{tabular}{|c|cr|cr|cr|cr|}
	\hline	\multirow{2}{*}{} & \multicolumn{2}{c|}{Remp} & \multicolumn{2}{c|}{HIKE} & \multicolumn{2}{c|}{POWER} & \multicolumn{2}{c|}{Corleone} \\
	\cline{2-9}	& F1 & \#Q & F1 & \#Q & F1 & \#Q & F1 & \#Q \\
	\hline 	IIMB & \textbf{95.3\%} & \textbf{10} & 84.4\% & 70  & 82.4\% & 70 & 94.7\% & 173 \\
			D-A 	& \textbf{97.7\%} & \textbf{60} & 93.3\% & 80 & 94.8\% & 70 & 94.5\% & 161 \\
			I-Y 	& \textbf{70.9\%} & \textbf{110} & 68.1\% & 270 & 69.3\% & 240 & 64.5\% & 402 \\
			D-Y 	& \textbf{87.2\%} & \textbf{130} & 86.4\% & 500 & 84.3\% & 500 & 76.3\% & 1166 \\
	\hline
\end{tabular}
\end{adjustbox}
\end{table}

\noindent\textbf{Competitors.} We compare Remp with three state-of-the-art crowdsourced ER approaches, namely, HIKE \cite{zhuang2017hike}, POWER \cite{chai2018power} and Corleone \cite{gokhale2014corleone}. We have introduced them in Section~\ref{sect:lit}. Since POWER and Corleone are designed for tabular data, we follow HIKE to partition entities into different clusters and deploy POWER and Corleone on each entity cluster. Specifically, IIMB, D-A and I-Y have clear type information, which is directly used to partition entities; for D-Y which does not have clear type information, we reuse the partitioning algorithm presented in HIKE.

\noindent\textbf{Setup.} We implement Remp and all competing methods (as their codes are not available) in Python 3 and C++, and strictly follow each competitor's reported parameters in the respective paper. All our codes are open sourced\footnote{\url{https://github.com/nju-websoft/Remp}}. All experiments are conducted on a workstation with an Intel Xeon 3.3GHz CPU and 128GB RAM. For Remp, we uniformly assign $k = 4$, $\tau = 0.9$ and $\mu = 10$, and use $0.3$ as the label similarity threshold. Similar to~\cite{gokhale2014corleone,zhuang2017hike,chai2018power}, we first prune out all definite non-matches (outlined in Section~\ref{sect:graph}), and all methods take the same retained entity matches $M_{rd}$ as input.

\begin{figure*}
\includegraphics[width=\textwidth]{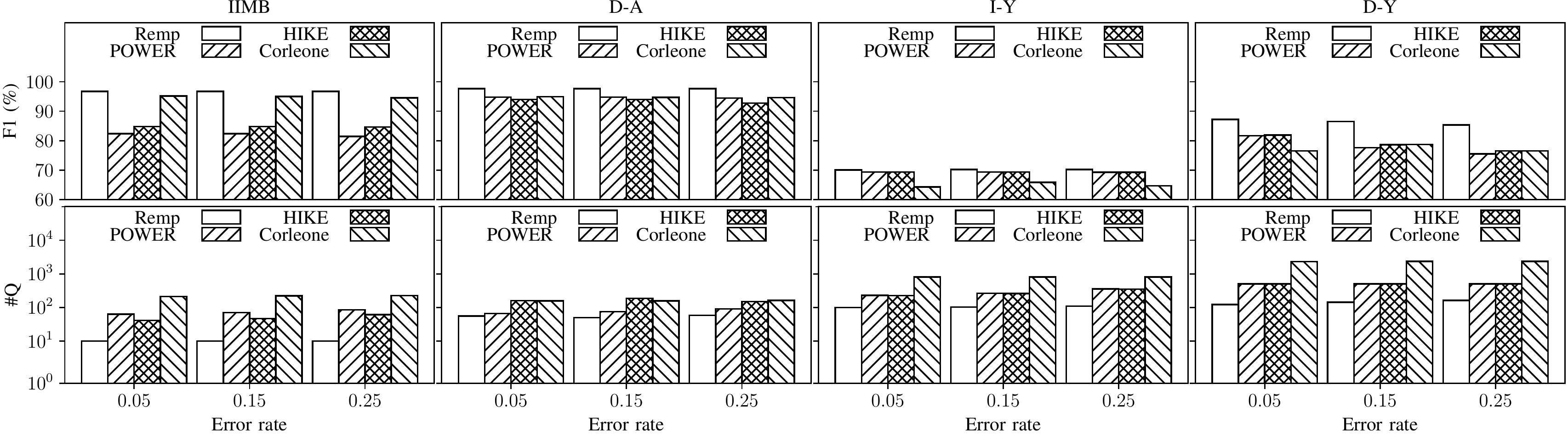}
\caption{F1-score and number of questions w.r.t. simulated workers of varying error rates}
\label{fig:simulated}
\end{figure*}


\subsection{Remp vs. State of the Art}
We set up two experiments, one is with real workers and the other is with simulated workers. The evaluation metrics are the F1-score and the number of questions (\#Q). 

\noindent\textbf{Experiment with real workers.} We publish the questions selected by each approach on Amazon MTurk. Each question is labeled by five workers to decide whether the two entities refer to the same object in the real world. We leverage the common worker qualifications to avoid spammers, i.e. we only allow workers with an approval rate of at least $95\%$. Furthermore, we reuse the label to each question for all approaches. Thus, all approaches can receive the same label to the same question. In total, 651 real workers labeled 3,484 questions.

The results are presented in Table~\ref{tab:real}, and we have the following findings -- (1) Remp consistently achieves the best F1-score with the fewest questions. (2) Remp improves the F1-score moderately, and reduces the number of questions significantly. (3) Specifically, compared with the second best result, Remp reduces the average number of questions by 85.7\%, 14.3\%, 54.2\% and 74.0\% on IIMB, D-A, I-Y and D-Y, respectively. To summarize, Remp achieves the best F1-score and saves the number of questions, especially when the dataset contains various relationships (e.g., D-Y).

\noindent\textbf{Experiment with simulated workers.} We also generate simulated workers who give wrong labels to questions with a fixed probability (called \emph{error rate}). We follow HIKE to set the error rate of simulated workers to 0.05, 0.15 and 0.25.

Figure~\ref{fig:simulated} shows the comparison results and we make several observations -- (1) All approaches obtain stable F1-scores, indicating their robustness in handling imperfect labeling. (2) Remp consistently obtains the highest F1-score, and beats the second best result by 0.4\%, 3.0\%, 1.6\%, 8.0\% on IIMB, D-A, I-Y and D-Y, respectively. This is attributed to Remp's robustness in uncovering matches with low literal similarities as compared to its competitors. For example, literal information is insufficient on I-Y and D-Y, thereby causing errors in the partial order of HIKE and POWER as well as the rules of Corleone. (3) Remp needs considerably fewer questions on IIMB, I-Y and D-Y. Compared with the second best result, Remp reduces the average number of questions by 79.9\%, 26.7\%, 62.5\% and 71.4\% on IIMB, D-A, I-Y and D-Y, respectively. One reason is that there are many types of entities on these datasets, and most matches are linked by relationships of different domain/range types. However, HIKE, POWER and Corleone cannot infer these matches efficiently. (4) On the D-A dataset, Remp only reduces six more questions than POWER, because in the ER graph there are many isolated components but only one type of relationship, making Remp have to check them all.


\subsection{Internal Evaluation of Remp}

In this section, we evaluate how each major module of Remp contributes to its overall performance.

\noindent\textbf{Effectiveness of attribute matching.} For the I-Y dataset, we reuse the gold standard created by SiGMa \cite{julien2013sigma}. For the D-Y dataset, we follow the recommendation of YAGO and extract 19 attribute matches from the subPropertyOf links\footnote{\url{http://webdam.inria.fr/paris/yd_relations.zip}} as the gold standard. Note that it is not necessary to match attributes for the other two datasets. We employ the conventional precision, recall and F1-score as our evaluation metrics. 

As depicted in Table~\ref{tab:attr_match}, Remp performs perfectly on the I-Y dataset but gains a relatively low recall on the D-Y dataset, and the 1:1 matching constraint helps Remp improve the precision. We observe that Remp fails to identify several attribute matches when the attribute pairs rarely appear in $\initialmatches$ (i.e. entity pairs from exact string matching), or when the values are dramatically different (e.g., the $icd10$ value for \textit{dbp:Trigeminal\_neuralgia} is ``G44.847'', but for \textit{yago:Trigeminal\_neuralgia} is ``G-50.0''). We argue that our attribute matches are sufficient to ER, since the first type of missing matches only helps resolve a small portion of entities but increases the running time of building similarity vectors, while the second type requires extra value processing/correction steps before computing the similarities.

\begin{table}
\centering
\caption{Effectiveness of attribute matching}
\label{tab:attr_match}
\begin{adjustbox}{width=\columnwidth}
\begin{tabular}{|c|c|ccc|ccc|}
	\hline \multirow{2}{*}{} & \#Ref. & \multicolumn{3}{|c|}{Remp} & \multicolumn{3}{|c|}{Remp w/o 1:1 matching} \\
	\cline{3-8} & matches & Precision & Recall & F1 & Precision & Recall & F1 \\
	\hline I-Y & \ 4 &  \ 100\% & \ 100\% & \ 100\% & 40.0\% & \ 100\% & 57.1\% \\
		 D-Y & 19 & 90.9\% & 52.6\% & 66.7\% & 52.4\% & 57.9\% & 55.0\% \\
	\hline
\end{tabular}
\end{adjustbox}
\end{table}

\noindent\textbf{Effectiveness of partial order based pruning.} 
To test the performance of the entity pair pruning module in Remp, we employ two metrics: (i) the reduction ratio (RR), which is the proportion of pruned candidates, and (ii) the pair completeness (PC), which is the proportion of true matches preserved in candidate/retained matches. We also use the error rate of optimal monotone classifier defined in~\cite{tao2018} to measure the incorrectness of the partial order.

As shown in Table~\ref{tab:pruning}, candidate matches contain most true matches on IIMB, D-A and I-Y, but only $88.7\%$ of true matches on the D-Y dataset. This is because on this dataset 8.4\% of the entities in the true matches lack labels. On IIMB and D-A, Remp has a relatively low RR, because the true matches account for $61.6\%$ and $22.1\%$ of the  candidate matches, respectively. On I-Y and D-Y, the PC of retained matches is close to that of candidate matches, but most candidate matches are pruned. This indicates that the entity pair pruning module is effective. We notice that the error rate on each dataset is nearly perfect, but the other monotonicity-based approaches (i.e., POWER and HIKE) achieve worse accuracy (see Table~\ref{tab:real}). The main reason is that our partial order is restricted to neighbors of each entity pair, where errors do not propagate to the whole candidate match set.

Furthermore, the pair completeness of retained matches w.r.t. varying $k$ is shown in Figure~\ref{fig:pruning_iter}. The pair completeness converges quickly on IIMB, D-A and I-Y but slowly on D-Y, because many matches have only one or two shared attributes, making the partial order work inefficiently. 

\begin{table}
\centering
\caption{Effectiveness of partial order based pruning}
\label{tab:pruning}
\begin{adjustbox}{width=\columnwidth}
\begin{tabular}{|c|rr|rrrr|}
	\hline \multirow{2}{*}{$k = 4$}	& \multicolumn{2}{c|}{Candidate matches} & \multicolumn{4}{c|}{Retained matches} \\ 
	\cline{2-7}     & \#Pairs & PC & \#Pairs (RR) & PC & \#Edges & Error rate \\
	\hline	IIMB    & 593 & 97.8\% & 516 (13.0\%) & 97.8\% & 1K &  1.91\% \\
			D-A 	& 24.2K & 97.9\% & 12.4K (49.0\%) & 97.7\% & 7.6K & 0.37\% \\
			I-Y 	& 2.44B & 98.0\% & 3.86M (99.6\%) & 97.4\% & 0.16M & 0.65\% \\
			D-Y 	& 2.70B & 88.7\% & 13.1M (99.7\%) & 84.8\% & 5.34M & 1.64\%\\
	\hline
\end{tabular}
\end{adjustbox}
\end{table}%
\begin{figure}[t]
\centering
\includegraphics[width=.6\columnwidth]{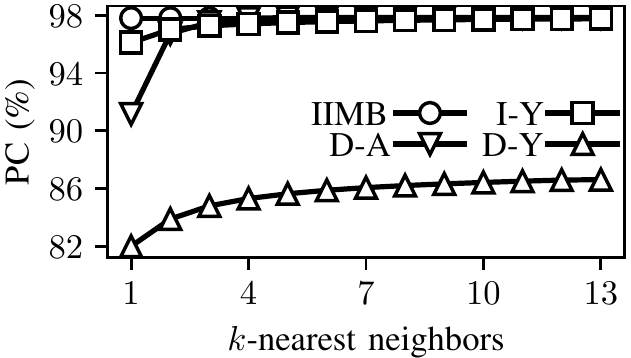}
\caption{Pair completeness w.r.t. $k$-nearest neighbors}
\label{fig:pruning_iter}
\end{figure}%
\begin{figure*}
	\centering
	\includegraphics[width=\textwidth]{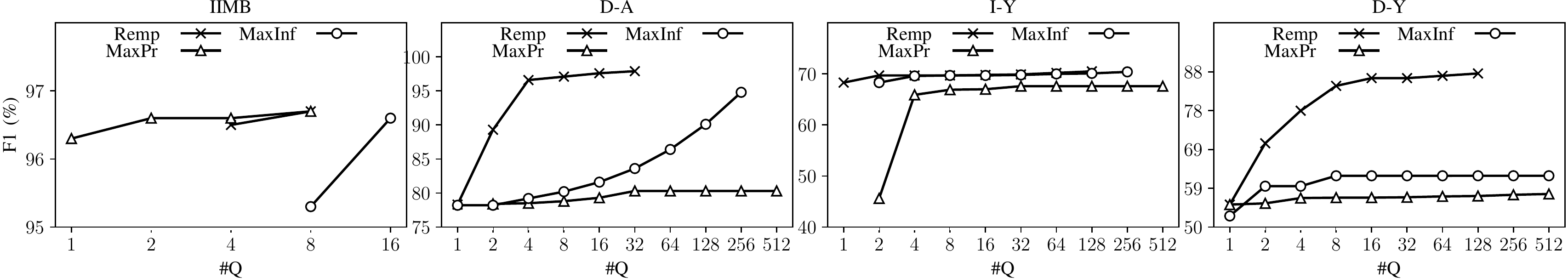}
	\caption{F1-score of Remp, MaxInf and MaxPr w.r.t. varying numbers of questions}
\label{fig:benefit}
\end{figure*}%

\noindent\textbf{Effectiveness of match propagation.} We additionally compare the match propagation module of Remp with two collective, non-crowdsourcing ER approaches: PARIS~\cite{suchanek2011paris} and SiGMa \cite{julien2013sigma}. More details about them have been given in Section~\ref{sect:lit}. To assess the real propagation capability of Remp, we ignore the classifier for handling isolated entity pairs. We randomly sample different portions of entity matches as the seeds for Remp, PARIS and SiGMa. The experiments are repeated five times and the F1-score is reported in Table~\ref{tab:propagation}. We observe that Remp achieves the best F1-score on D-A, I-Y and D-Y. On IIMB (20\% of matches), the F1-score of Remp is slightly worse than that of SiGMa, because SiGMa can obtain matches between isolated entities based on their literal similarities directly. Overall, Remp can achieve the highest F1-score in most cases.

\begin{table}
\centering
\caption{F1-score w.r.t. varying portions of seed matches}
\label{tab:propagation}
\begin{tabular}{|c|l|cccc|}
	\hline  \multicolumn{2}{|c|}{} & \multicolumn{4}{c|}{\% of matches} \\ 
	\cline{3-6}  \multicolumn{2}{|c|}{} & 20 & 40 & 60 & 80 \\
	\hline \multirow{3}{*}{IIMB} 
		& Remp	& 97.5\% & \textbf{98.6\%} & \textbf{99.7\%} & \textbf{99.7\%} \\
		& PARIS	& 96.0\% & 96.5\% & 97.0\% & 97.4\% \\
		& SiGMa	& \textbf{97.6\%} & \textbf{98.6\%} & 99.0\% & 99.6\% \\
	\hline \multirow{3}{*}{D-A}
		& Remp	& \textbf{93.3\%} & \textbf{97.2\%} & \textbf{98.9\%} & \textbf{99.7\%} \\
		& PARIS	& 71.3\% & 79.1\% & 86.2\% & 92.5\% \\
		& SiGMa	& 92.7\% & 94.9\% & 96.7\% & 98.4\% \\
	\hline \multirow{3}{*}{I-Y}
		& Remp	& \textbf{41.2\%} & \textbf{63.4\%} & \textbf{78.8\%} & \textbf{90.6\%} \\
		& PARIS	& 34.8\% & 57.9\% & 75.4\% & 89.0\%  \\
		& SiGMa	& 34.0\% & 58.5\% & 76.1\% & 89.3\% \\
	\hline \multirow{3}{*}{D-Y}
		& Remp	& \textbf{83.2\%} & \textbf{91.4\%} & \textbf{95.0\%} & \textbf{99.7\%} \\
		& PARIS 	& 82.2\% & 84.7\% & 87.2\% & 89.5\% \\
		& SiGMa 	& 33.6\% & 57.4\% & 75.3\% & 89.1\% \\
	\hline
\end{tabular}
\end{table}

\noindent\textbf{Effectiveness of question selection benefit.} We implement two alternative heuristics as baselines, namely MaxInf and MaxPr, to evaluate the question selection benefit. We set $\mu = 1$ and use ground truths as labels. MaxInf selects the questions with the maximal inference power. MaxPr chooses the questions with the maximal match probability. Figure~\ref{fig:benefit} depicts the result and each curve starts when the F1-score is greater than 0. We find (1) Remp always achieves the best F1-score with much less number of questions. (2) MaxPr obtains the lowest F1-score except on the IIMB dataset, because it does not consider how many matches can be inferred by the new question. (3) MaxInf performs worse than Remp, as it often chooses non-matches as the questions, making it find fewer matches than Remp using the same number of questions. This experiment demonstrates that our $\mathtt{benefit}$ function is the most effective one.

\noindent\textbf{Effectiveness of multiple questions selection.} Table~\ref{tab:mu} depicts the F1-score, the number of questions (\#Q) and the number of loops (\#L) of the multiple questions selection module (with ground truth as labels), in term of different question number thresholds per round ($\mu=1,5,10, 20$), and our findings are as follows -- (1) Remp achieves a stable F1-score on all datasets. (2) The number of questions increases when $\mu$ increases, especially when $\mu = 10, 20$. This is probably because Remp always asks $\mu$ questions in one human-machine loop, and it has to ask an extra batch of questions when some questions with large $\benefit$ are labeled as non-matches. Although asking multiple questions in one loop increases the monetary cost, it reduces 75\%--94.1\% number of loops when $\mu=20$.

\noindent\textbf{Effectiveness of inference on isolated entity pairs.} We examine the performance of the random forest classifier in each dataset in the experiments with real workers. As depicted in Table~\ref{tab:isolated}, the classifier achieves poor performance on IIMB and D-A. Due to the tiny proportion of isolcated entity pairs in these two datasets, this is probably caused by occasionality. When the portions of isolated matches increase on I-Y and D-Y, the classifier achieves comparable performance to Remp. This demonstrates that Remp can infer enough matches for resolving the entire dataset even if the ER graph does not cover all candidate matches.

\begin{table}
\centering
\caption{F1-score and number of questions with different question number thresholds per round}
\label{tab:mu}
\begin{adjustbox}{width=\columnwidth}
\begin{tabular}{|c|crr|crr|crr|crr|}
	\hline		\multirow{2}{*}{} & \multicolumn{3}{c|}{$\mu=1$} & \multicolumn{3}{c|}{$\mu=5$} & \multicolumn{3}{c|}{$\mu=10$} & \multicolumn{3}{c|}{$\mu=20$} \\
	\cline{2-13}	& F1 & \#Q & \#L & F1 & \#Q & \#L & F1 & \#Q & \#L & F1 & \#Q & \#L \\
	\hline 		IIMB & 96.7\% & 8 & 8 & 96.7\% & 10 & 2 & 96.7\% & 20 & 2 & 96.9\% & 40 & 2 \\
				D-A  & 97.8\% & 52 & 52 & 97.8\% & 60 & 12 & 97.7\% & 60 & 6 & 97.3\% & 80 & 4\\
				I-Y & 71.4\% & 102 & 102 & 71.3\% & 105 & 21 & 71.3\% & 110 & 11 & 71.4\% & 120 & 6\\
				D-Y & 87.3\% & 127 & 127 & 87.2\% & 135 & 27 & 87.3\% & 140 & 14 & 87.2\% & 160 & 8 \\
	\hline
\end{tabular}
\end{adjustbox}
\end{table}

\begin{table}
\centering
\caption{F1-score of inference on isolated entity pairs}
\label{tab:isolated}
\begin{tabular}{|c|c|cc|}
	\hline 	& Isolated matches & Remp & Random forest \\
	\hline	IIMB & \ \,0.3\% & 95.3\% & \ \,0.0\% \\
			D-A & \ \,0.4\% & 97.7\% & 13.7\% \\
			I-Y & 28.1\% & 70.9\% & 66.3\% \\
			D-Y & 60.4\% & 87.2\% & 84.5\% \\
	\hline
\end{tabular}
\end{table}

\noindent\textbf{Efficiency Analysis.} We run each algorithm three times to record the running time on each of the four datasets. The average running time of Algorithm~\ref{algo:pop} on four datasets is 1s, 8s, 3.9h and 3.6h, the average running time of Algorithm~\ref{algo:imsd} is 0.476s, 6.7s, 109s and 1.07h, and the average running time of Algorithm~\ref{algo:mqs} is 0.128s, 1.27s, 78.5s and 1.25h. We follow the analysis in \cite{das2017falcon} to evaluate the performance of Remp on 25\%, 50\%, 75\% and 100\% of candidate (retained) entity matches $M_c$ ($M_{rd}$) on the D-Y dataset. As depicted in Figure~\ref{fig:efficiency}, the running time of Algorithm~\ref{algo:pop} and Algorithm~\ref{algo:imsd} increase linearly as the number of entity pairs increases. The running time of Algorithm~\ref{algo:mqs} on 25\% and 50\% of retained entity matches are close. This is probably because the sizes of some inferred match sets do not increase significantly.
\begin{figure}[t]
\centering
\includegraphics[width=\columnwidth]{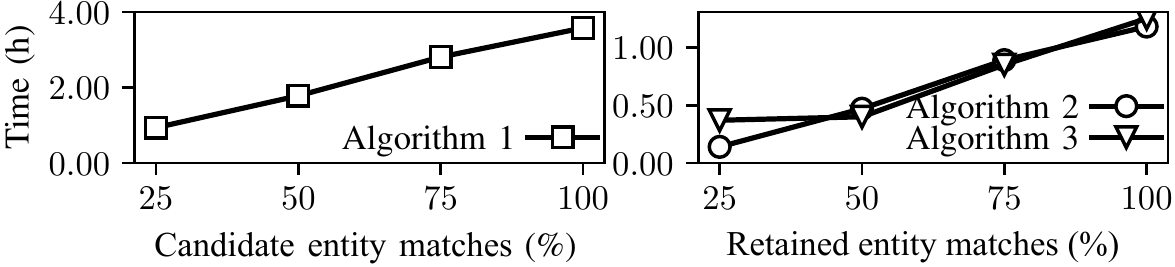}
\caption{Running time w.r.t. different portion of entity pairs}
\label{fig:efficiency}
\end{figure}


\section{Conclusion}
\label{sect:concl}

In this paper, we proposed a crowdsourced approach leveraging relationships to resolve entities in KBs collectively. Our main contributions are a partial order based pruning algorithm, a relational match propagation model, a constrained multiple questions selection algorithm and an error-tolerant truth inference model. Compared with existing work, our experimental results demonstrated superior ER accuracy and much less number of questions. In future work, we plan to combine transitive relation, partial order and match propagation together as a hybrid ER approach.

\section*{Acknowledgments}
This work was partially supported by the National Key R\&D Program of China under Grant 2018YFB1004300, the National Natural Science Foundation of China under Grants 61872172, 61772264 and 91646204, and the ARC under Grants DP200102611 and DP180102050. Zhifeng Bao is the recipient of Google Faculty Award.

\balance

\bibliographystyle{IEEEtran}

\bibliography{icde}

\end{document}